%% file: ifacconf.tex
\documentclass{ifacconf}

\usepackage{graphicx}      
\usepackage{natbib}        
\usepackage{amsmath,amssymb,mathtools}
\usepackage{subfig}
\usepackage{color}
\usepackage{enumitem}
\usepackage{amsmath}
\usepackage{url}
\input{notation}

\newtheorem{assumption}{Assumption}
\begin{document}
\begin{frontmatter}

\title{Probabilistic Recursively Feasible Motion Planning
Under Uncertain Environments \thanksref{footnoteinfo}} 

\thanks[footnoteinfo]{This work was supported by the IITP (Institute of Information \& Communications Technology Planning \& Evaluation)-ITRC(Information Technology Research Center) grant funded by the Korea government (Ministry of Science and ICT) (IITP-2025-RS-2023-00259991.}

\author[First]{Hyeontae Sung},
\author[First]{Hyeongchan Ham},
\author[First]{Junyoung Park},
\author[Second]{Kai Ren},
and
\author[First]{Heejin Ahn} 

\address[First]{School of Electrical Engineering, Korea Advanced Institute of Science and Technology, Daejeon, Korea (e-mail: \{hyeontae.sung, hyeongchan.ham, junyoung766, heejin.ahn\}@kaist.ac.kr)}
\address[Second]{The SYCAMORE Lab, École Polytechnique Fédérale de Lausanne (EPFL), Switzerland (e-mail: \{kai.ren\}@epfl.ch)}

\begin{abstract}
Safe motion planning in uncertain, time-varying environments is challenging because the safe region can change unpredictably across planning steps, often causing a loss of recursive feasibility. In this work, we present a Probabilistic Recursively Feasible Model Predictive Control (PRF-MPC) framework that guarantees recursive feasibility with a specified probability. We introduce properties that an ideal predictor should satisfy to ensure distributional consistency, and use these properties to derive closed-form expressions for the means and covariances of trajectories predicted at future time steps. Building on this analysis, we construct safety constraints that ensure, with high probability, that the current safe set is contained within the safe sets at future time steps, thereby probabilistically guaranteeing recursive feasibility. Simulation results on a lane-change scenario demonstrate that the proposed method significantly improves recursive feasibility.
\end{abstract}

\begin{keyword}
Stochastic optimal control problems, model predictive control, recursive feasibility
\end{keyword}

\end{frontmatter}

\section{Introduction}
Safe motion planning is a fundamental task in safety-critical applications.
A safe motion planner must generate collision-free ego trajectories in uncertain and time-varying environments, such as moving obstacles.
To address this challenge, Model Predictive Control (MPC) is commonly utilized because it enables the safety guarantee through safety constraint formulations \citep{ahn2021safe, ren2022chance, lindemann2023safe}. 

Although these approaches perform well in many settings, they do not guarantee recursive feasibility.
Recursive feasibility ensures that once the MPC problem is feasible at a given planning step, it remains feasible at all subsequent steps. 
When recursive feasibility does not hold, the planner may violate its safety guarantee during closed-loop execution. In practice, maintaining recursive feasibility is challenging under uncertain and time-varying environments because the feasible region continually shifts over time. Consequently, a planned trajectory that is safe at the current step may become infeasible at future steps.

To address this problem, various approaches have been proposed to guarantee recursive feasibility. \cite{brudigam2021stochastic} utilizes fail-safe trajectory planning, and \cite{chen2023invariant} formulates a forward reachable set for safe motion planning. Both approaches consider formulating a control invariant set to ensure recursive feasibility. However, they are considered too conservative because they consider the worst-case reachable set of other vehicles. To overcome this, \cite{yang2026safe} proposes a contingency trajectory optimization framework based on forward reachable sets constructed from control-intent sets. However, they assume that prediction updates do not occur during the planning horizon, leaving the system vulnerable to sudden infeasibility when an update actually occurs. \cite{stamouli2024recursively} maintains recursive feasibility by merging feasible sets from previous planning steps, but this union can become overly permissive and compromise safety when predictions shift significantly.

In our previous work, \cite{ren2024recursively} guarantees recursive feasibility by assuming that the predicted distributions of obstacles do not shift significantly and their uncertainty shrinks sufficiently over time. However, our analysis shows that this assumption rarely holds in practice and is satisfied only with low probability, even under a known distribution of future trajectories. Although this assumption is a sufficient condition for recursive feasibility, the violation of the assumption degrades the recursive feasibility of MPC in practice. These limitations motivate alternative methods that can maintain recursive feasibility.

To guarantee recursive feasibility, we should consider the distribution propagation and model it explicitly. 
However, modeling the temporal propagation of predicted distributions is challenging with off-the-shelf prediction models \citep{ivanovic2019trajectron, varadarajan2022multipath++}, which typically minimize per-step prediction error without maintaining temporal stochastic consistency. This can result in temporally inconsistent distributions, making their closed-loop propagation difficult to characterize.
This behavior is also undesirable from a planning perspective, as such instability may induce oscillatory planning behaviors and compromise recursive feasibility even when prediction errors remain similar. This motivates us to propose key properties of an \textit{ideal predictor} that possess prediction-time invariance, enabling explicit modeling of distribution propagation.

We propose a Probabilistic Recursively Feasible MPC (PRF-MPC) framework that guarantees recursive feasibility with a specified probability in shrinking-horizon planning. Our approach constructs a safe set with probabilistic margins that account for future distributions at a chosen confidence level, thereby providing robustness to distribution shifts over future planning steps. To formulate these margins, we propose properties of an ideal predictor under a Gaussian assumption.

Our main contributions are summarized as follows.
\begin{itemize}
    \item  We design a recursively feasible MPC framework under uncertain dynamic obstacles with probabilistic guarantees.
    \item We propose properties of an \emph{ideal predictor}, and then formalize how the prediction distributions propagate over time.    
    \item We validate the proposed approach through simulation on an autonomous driving lane-changing scenario.
\end{itemize}

\textit{Notation:} We denote a sequence of consecutive integers from $a$ to $b$, i.e., $\{a, a+1, \dots, b \}$, as $\mathbb{Z}_{a:b}$. Let $\mathbb{P}(\cdot)$ refer to a probability measure. The Gaussian distribution is denoted as $\mathcal{N}(\mu, \Sigma)$, where $\mu$ is the mean and $\Sigma$ is the covariance matrix. We notate conjunction as $\bigwedge$ and disjunction as $\bigvee$. 
We use capital letters, such as $O$, to denote random variables.

\section{Problem Statement}

\subsection{Recursively Feasible Safe Motion Planning}\label{subsection: PS 1}
We consider an MPC problem that plans an optimal trajectory of the ego vehicle (EV) while guaranteeing collision avoidance from other vehicles (OVs) over a finite horizon. The EV follows a deterministic discrete-time dynamics 
\begin{equation}
  x_{t+1} = f(x_t, u_t), \qquad x_t \in \mathbb{R}^{n_x},\; u_t \in \mathbb{R}^{n_u}, \label{eq:systemdynamic}\nonumber
\end{equation}
subject to state and input constraints
\begin{equation}
  x_t \in \mathcal{X}\subset \mathbb{R}^{n_x}, \quad u_t \in \mathcal{U}\subset \mathbb{R}^{n_u}, \label{eq:xuconstraints}\nonumber
\end{equation}
where $\mathcal{X}$ and $\mathcal{U}$ are time-invariant convex sets. 

Let $T\in \mathbb{N}$ be the planning horizon, and $\tau$ be the planning step. We denote by $\mathcal{X}^{\text{safe}}_{t|\tau}$ the safe set at timestep $t$ that is constructed during planning step $\tau$. Since this set is defined in terms of an environment uncertainty, we impose it as a chance constraint with a risk tolerance $\epsilon$,
\begin{equation}
    \mathbb{P}\Bigg(\bigwedge_{t=\tau+1}^T x_t \in \mathcal{X}^{\text{safe}}_{t|\tau}\Bigg) \geq 1-\epsilon.\label{eq: safeset}
\end{equation}
We conservatively reformulate this joint chance constraint into individual chance constraints via Boole’s inequality,
\begin{equation}
    \forall t\in\mathbb{Z}_{\tau+1:T},\quad \mathbb{P}\Big( x_t \in \mathcal{X}^{\text{safe}}_{t|\tau}\Big) \geq 1-\epsilon_t,
    \label{eq: single safeset}
\end{equation}
where $\epsilon_t$ is the risk tolerance at timestep $t$ for planning step $\tau$, allocated from the overall risk tolerance $\epsilon$ for the planning horizon $T$, that is, $\overset{T}{\underset{t=1}{\sum}}\epsilon_t = \epsilon$. 

The shrinking-horizon MPC problem at planning step $\tau$ is
\begin{subequations} \label{eq: MPC}
    \begin{alignat} {3}
        & \underset{u_{\tau:T-1}}{\text{min}} & \quad &\mathcal{J}(x_{\tau+1:T}, u_{\tau:T-1})
        &
        \\
        & s.t. & \quad & x_{t+1} = f(x_t,u_t), \quad\quad\quad\;\; 
        &\forall t \in \mathbb{Z}_{\tau:T-1}, 
        \label{eq: dynamics}
        \\
        & & \quad &   x_t \in \mathcal{X}, \quad\quad\quad\quad\quad\quad\quad\; 
        &\forall t \in \mathbb{Z}_{\tau+1:T},
        \label{eq: x bounds}
        \\
        & & \quad &  u_t \in \mathcal{U}, \quad\quad\quad\quad\quad\quad\quad\quad 
        &\forall t \in \mathbb{Z}_{\tau:T-1}, 
        \label{eq: u bounds}
        \\
        & & \quad &
        \mathbb{P}\Big( x_t \in \mathcal{X}^{\text{safe}}_{t|\tau}\Big) \geq 1-\epsilon_t, \;
        &\forall t \in \mathbb{Z}_{\tau+1:T}, 
        \label{eq: single safe constraint}
    \end{alignat}
\end{subequations}
where $x_{\tau+1:T}=(x_{\tau+1}, \ldots, x_T)$ and $u_{\tau:T-1}=(u_{\tau},\ldots, u_{T-1})$. 
Following the setting of \cite{ren2024recursively}, we adopt the shrinking-horizon setting which is suitable for maneuvers requiring interaction with OVs such as a turn or lane-changing within a finite horizon.

We define the probabilistically safe set at timestep $t$ and planning step $\tau$ as the set of states for which the safety constraint at $t$ is satisfied with probability at least $1-\epsilon_t$,
\begin{equation}    
    \bar{\mathcal{X}}^{\text{ safe}}_{t|\tau} :=\{x_t\in \mathbb{R}^{n_x} \mid \mathbb{P}\big(x_t\in \mathcal{X}^{\text{safe}}_{t|\tau} \big) \geq 1-\epsilon_t\}.
    \label{eq: feasible set}
\end{equation}

Recursive feasibility requires that feasibility at one planning step $\tau$ implies feasibility at the next $\tau+1$. The following lemma gives a sufficient condition for this property in the shrinking-horizon MPC.
\begin{lem}\label{lem: RecurFeasi}
The MPC problem \eqref{eq: MPC} is recursively feasible if the safe sets satisfy the step-to-step inclusion,
\begin{equation}
\forall \tau \in \mathbb{Z}_{0:t-1}, \,\text{it holds that }
\bar{\mathcal{X}}^{\text{safe}}_{t|\tau}\subseteq \bar{\mathcal{X}}^{\text{safe}}_{t|\tau+1}, 
\, \forall\, t \in \mathbb{Z}_{2:T}.
\label{eq: set inclusion}
\end{equation}
\end{lem}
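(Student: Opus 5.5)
The argument is the standard shifted-candidate (tail) argument for shrinking-horizon MPC, adapted so that the only time-varying ingredient, the probabilistically safe set $\bar{\mathcal{X}}^{\text{safe}}_{t|\tau}$ in \eqref{eq: feasible set}, is handled by the inclusion \eqref{eq: set inclusion}. The first observation is that the chance constraint \eqref{eq: single safe constraint} at planning step $\tau$ is, by definition \eqref{eq: feasible set}, equivalent to the deterministic membership $x_t \in \bar{\mathcal{X}}^{\text{safe}}_{t|\tau}$. This holds because the ego trajectory is a deterministic function of the initial state and the inputs. So problem \eqref{eq: MPC} at step $\tau$ is a deterministic feasibility problem with constraints \eqref{eq: dynamics}--\eqref{eq: u bounds} and $x_t\in\bar{\mathcal{X}}^{\text{safe}}_{t|\tau}$ for $t\in\mathbb{Z}_{\tau+1:T}$.

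Next, assume \eqref{eq: MPC} is feasible at step $\tau$, and let $u^\star_{\tau:T-1}$ with states $x^\star_{\tau+1:T}$ be a feasible (for instance optimal) solution. Apply $u^\star_\tau$. Since the dynamics are deterministic and there is no model mismatch, the realized state at step $\tau+1$ is exactly $x^\star_{\tau+1}$. As the candidate at step $\tau+1$, I take the tail $u^\star_{\tau+1:T-1}$. Starting from $x^\star_{\tau+1}$ it generates the states $x^\star_{\tau+2:T}$, so \eqref{eq: dynamics} holds. Constraints \eqref{eq: x bounds} and \eqref{eq: u bounds} hold on the index sets $\mathbb{Z}_{\tau+2:T}$ and $\mathbb{Z}_{\tau+1:T-1}$, since these are subsets of the index sets at step $\tau$ and $\mathcal{X},\mathcal{U}$ are time-invariant. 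The shrinking horizon matters here: no terminal extension is required, because the horizon end $T$ stays fixed.

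The only nontrivial step is the safety constraint. For each $t\in\mathbb{Z}_{\tau+2:T}$ we have $x^\star_t\in\bar{\mathcal{X}}^{\text{safe}}_{t|\tau}$, and \eqref{eq: set inclusion} with this $\tau\le t-2\le t-1$ and $t\ge 2$ gives $\bar{\mathcal{X}}^{\text{safe}}_{t|\tau}\subseteq\bar{\mathcal{X}}^{\text{safe}}_{t|\tau+1}$. Hence $x^\star_t\in\bar{\mathcal{X}}^{\text{safe}}_{t|\tau+1}$, which is \eqref{eq: single safe constraint} at step $\tau+1$. The step needing the most care is checking that the index ranges in \eqref{eq: set inclusion} cover every pair $(t,\tau)$ used: $t\ge\tau+2\ge 2$ and $\tau\le t-1$ both hold. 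It also relies on the risk allocation $\epsilon_t$ depending only on $t$, not on $\tau$, so that the two sets are defined with the same level $1-\epsilon_t$. Finally, induction on $\tau$ from the initially feasible step up to $T-1$ gives recursive feasibility. The last step $\tau=T-1$ has a single constraint at $t=T$, which is covered by the same argument.
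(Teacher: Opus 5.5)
Your proposal is correct and is the same argument as the paper. Feasibility at step $\tau$ places the planned states in $\bar{\mathcal{X}}^{\text{safe}}_{t|\tau}$, the inclusion hypothesis carries them into $\bar{\mathcal{X}}^{\text{safe}}_{t|\tau+1}$ for $t\in\mathbb{Z}_{\tau+2:T}$, and induction on $\tau$ concludes; you additionally spell out the shifted input sequence and the dynamics, state and input checks that the paper leaves implicit. Your stated reliance on $\epsilon_t$ being independent of $\tau$ is not actually needed, because the hypothesis is stated directly on the sets that define the safety constraints at steps $\tau$ and $\tau+1$.
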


\begin{pf}
If \eqref{eq: MPC} is feasible at $\tau=k$, then we have $x_{t|k}\in \bar{\mathcal{X}}^{\text{safe}}_{t|k}$ for all $t\in \mathbb{Z}_{k+1:T}$. By \eqref{eq: set inclusion}, these states remain inside the safe sets at the next planning step, i.e., $x_{t|k} \in \bar{\mathcal{X}}^{\text{safe}}_{t|k+1}$ for all $t\in \mathbb{Z}_{k+2:T}$. Induction over $\tau$ yields the claim.~~~~~~~~~~~~~~~~~~~~~~~~~~~~~~~~~~~~~~~~~~~~~~~~~~~~~~~~~~~~~~~~~~\qed
\end{pf}
\noindent
The main challenge in imposing \eqref{eq: set inclusion} is that the probabilistically safe sets $\bar{\mathcal{X}}^{\text{safe}}_{t|\tau+1}$ are not yet determined at planning step $\tau$.
With this perspective, our goal is to design a \textit{probablistic recursively feasible} planner, named PRF-MPC, which satisfies the sufficient condition \eqref{eq: set inclusion} with a specified violation tolerance $\gamma$, i.e.,
\begin{equation}    
    \mathbb{P}\Bigg(\bigwedge_{t=2}^T\bigwedge_{\tau=0}^{t-1} \bar{\mathcal{X}}^{\text{safe}}_{t|\tau} \subseteq \bar{\mathcal{X}}^{\text{safe}}_{t|\tau+1}\Bigg) \geq 1-\gamma. \label{eq: recur chance constraint}
\end{equation}
To construct such a planner, 
we model the random set of the predicted $\bar{\mathcal{X}}^{\text{safe}}_{t|\tau+1}$ using only the information available at planning step $\tau$,
and denote this random set as $\widehat{\mathcal{X}}^{\text{safe}}_{t|\tau+1}$.
We will discuss this in detail in Section~\ref{sec:future_distribution}.



\subsection{Assumptions on Trajectory Predictions}
To formulate the safe sets in \eqref{eq: recur chance constraint}, we introduce Gaussian assumptions of the OV movements. Let $O_t\in \mathbb{R}^{n_o}$ denote the state (e.g., 2-dimensional position) of the OV at timestep $t$, and $O_{t|\tau}$ its predicted state at planning step $\tau$. 
\begin{assumption}\label{assumption:Gaussian} 
The state of the OV at any $t\in \mathbb{Z}_{1:T}$ predicted at $\tau\in \mathbb{Z}_{0:t-1}$ is Gaussian, i.e.,
\begin{equation}
  O_{t|\tau} \sim \mathcal{N}\!\big(\mu_{t|\tau},\, \Sigma_{t|\tau}\big).\nonumber
\end{equation}
\end{assumption}

\begin{assumption}\label{assumption:jointGaussian}
    For any $t_1, t_2 \in \mathbb{Z}_{\tau+1:T}$, the joint distribution $\big(O_{t_1|\tau},\, O_{t_2|\tau}\big)$ is also Gaussian.
\end{assumption}
\textit{Assumption~\ref{assumption:jointGaussian}} is also commonly adopted, as seen in Gaussian process–based dynamics models \citep{hewing2020simulation}.



\begin{figure}[t]
    \centerline{\includegraphics[width=9.0cm]{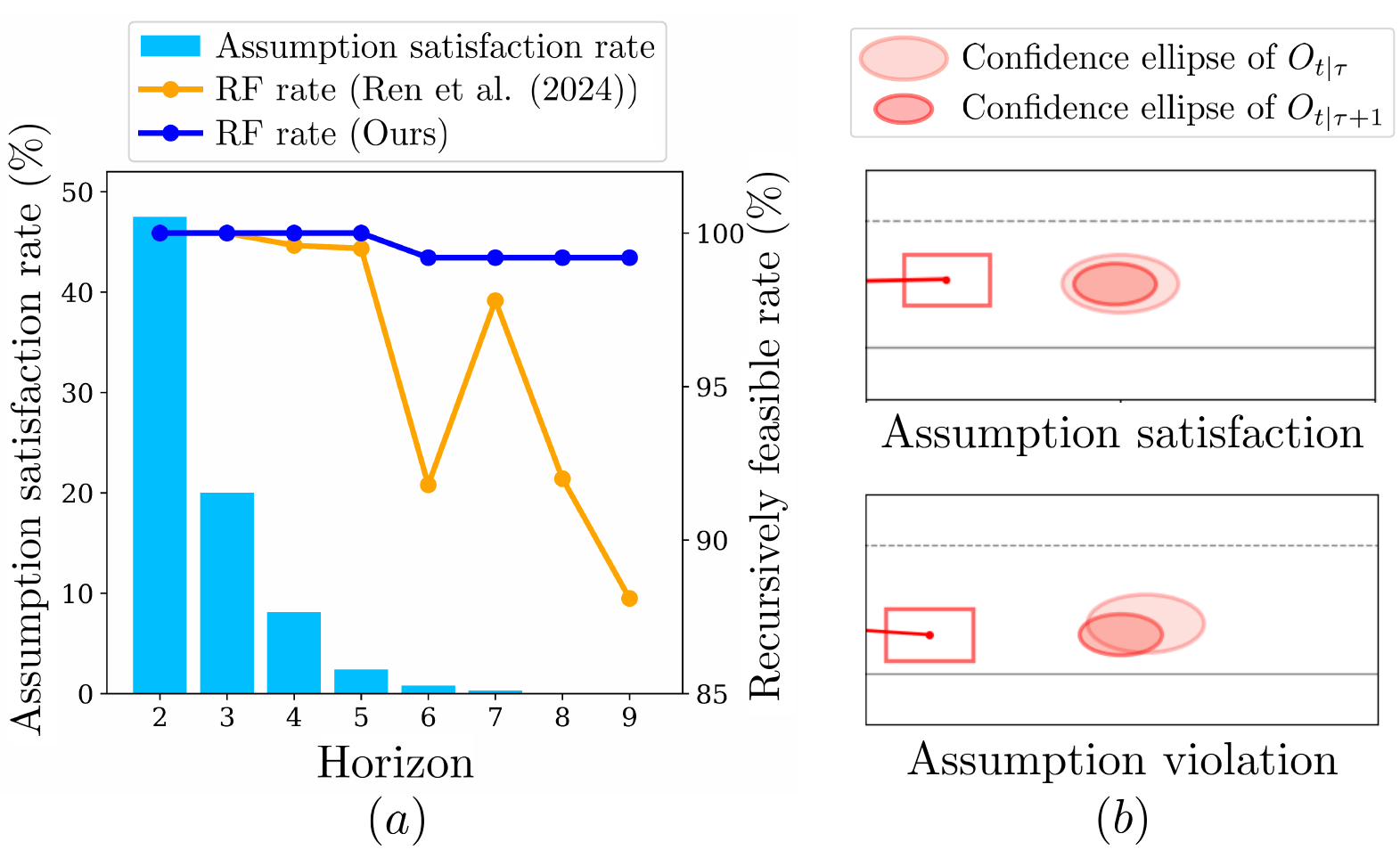}}
    \vspace*{-0.15in}
    \caption{(a) Satisfaction rate of \eqref{cond: covBoundmean} and recursively feasible rates with respect to horizon length $T$. (b) Examples of satisfaction and violation cases of \eqref{cond: covBoundmean}.}
    \label{fig: TCST assumption}
\end{figure}

\subsection{Existing Sufficient Condition and Its Limitation}
\cite{ren2024recursively} proposes a sufficient condition to satisfy the recursively feasible condition \eqref{eq: set inclusion}: for any  $t \in \mathbb{Z}_{2:T}$ and planning steps $\tau \in \mathbb{Z}_{0:t-1}$, the shift of the mean prediction is bounded by the covariance shift as follows:
    \begin{equation} \label{cond: covBoundmean}
         \norm{\mu_{t|\tau} - \mu_{t|\tau+1}}_2 \leq \Gamma_t \cdot \Big( \sqrt{\norm{\Sigma_{t|\tau}}_F} - \sqrt{\norm{\Sigma_{t|\tau+1}}_F} \Big),
    \end{equation}
where $\Gamma_t:=\Psi^{-1}(1-\epsilon_t)$ is the $(1-\epsilon_t)$ quantile of the standard Gaussian distribution.
However, this sufficient condition~\eqref{cond: covBoundmean} is rarely satisfied in practice even when using an \textit{ideal predictor} that follows reasonable properties, as discussed in Section~\ref{subsection:ideal predictor}. Using this ideal predictor, we evaluate the satisfaction rate of \eqref{cond: covBoundmean} and the resulting recursive feasibility rates (RF rates). As shown in Fig.~\ref{fig: TCST assumption}, the satisfaction rate decreases sharply as the planning horizon $T$ increases, reaching 0\% at $T=9$. This occurs because \eqref{cond: covBoundmean} must hold at every timestep across the entire horizon. Consequently, violations of this assumption cause a marked decline in the RF rate, whereas our proposed method maintains a high RF rate.






\section{Future Distribution Formulation}\label{sec:future_distribution}
The main challenge in ensuring recursive feasibility is that the future safe sets $\bar{\mathcal{X}}^{\text{safe}}_{t|\tau+1}$ in \eqref{eq: set inclusion} are unknown at planning step $\tau$. To model these future safe sets, $\widehat{\mathcal{X}}^{\text{safe}}_{t|\tau+1}$, we introduce two reasonable properties that a trajectory predictor should satisfy if it is to be considered ideal.

\subsection{Properties of an Ideal Predictor}\label{subsection:ideal predictor}

The first property is that an ideal predictor can sample from the true distribution of OV movements.

\textit{Property 1} (True Distribution Consistency).
Trajectory samples $(o_{\tau+1}, o_{\tau+2}, \dots, o_{\tau+T})^{(i)}, i \in \mathbb{Z}_{1:N}$ generated by an ideal predictor follow the true underlying distributions.

The second property captures the closed-loop behavior expected of an ideal predictor.

\textit{Property 2} (Conditional Invariance).
Conditional distributions are identical if conditioned on the same state and timestep:
\begin{align}
        \mathbb{P}(O_{t_2|\tau_1} \mid O_{t_1|\tau_0}=o)
    = \mathbb{P}(O_{t_2|\tau_2} \mid O_{t_1|\tau_0}=o), \nonumber \\
    \, \forall\, \tau_0\leq\tau_1,\tau_2 \le t_1 \le t_2, \nonumber
\end{align}
where $o$ is a realization of $O$.

For example, when $\tau_0=\tau_1=\tau$, $\tau_2=\tau+1$, and $t_1=\tau+1$, 
\[
\mathbb{P}(O_{t_2|\tau} \mid O_{\tau+1|\tau}=o)
= \mathbb{P}(O_{t_2|\tau+1} \mid O_{\tau+1|\tau}=o).
\]
This means that trajectories predicted at $\tau$ passing through $O_{\tau+1}=o$ follow the same conditional distribution as those predicted at $\tau+1$ starting from $O_{\tau+1}=o$.

\textit{Property 2} implies that once conditioned on the same state, future distributions remain invariant to the prediction time. This property is also desirable from a planning perspective, as it mitigates the temporal inconsistency of safe sets that can cause unstable planning behaviors in closed-loop planning.

\subsection{Deriving Future Predicted Distributions}\label{subsection:predict prediction}
We now derive how to estimate the distribution that will be predicted at future planning steps using information available at $\tau$ and the properties of an ideal predictor.

Specifically, we aim to compute the mean and covariance of future estimation $O_{t|\tau+a}$ for $1\le a < t-\tau$. 
At planning step $\tau$, $O_{\tau+a}$ is not yet determined. We therefore consider the conditional distribution $O_{t|\tau+a}$ given $O_{\tau+a|\tau} = o$, and obtain the overall distribution by marginalizing over the random variable $O_{\tau+a|\tau}$. The conditional mean is
\begin{align*}
    \mathbb{E}\big[O_{t|\tau+a} | O_{\tau + a|\tau}=o\big] 
    & = \mathbb{E}[O_{t|\tau}|O_{\tau+a|\tau}=o].
\end{align*}
This equality holds by \textit{Property~2} of the ideal predictor.
Therefore, the conditional mean $\mathbb{E}[O_{t|\tau+a} | O_{\tau + a|\tau}=o]$ can be expressed using only the information available at the current planning step $\tau$. By a similar derivation, the covariances satisfy an analogous relation: $Cov(O_{t|\tau+a}|O_{\tau+a|\tau})=Cov(O_{t|\tau}|O_{\tau+a|\tau})$.

Because the distribution of $O_{t|\tau}$ is known for all $t\in \mathbb{Z}_{\tau+1:T}$, i.e., $O_{t|\tau}\sim \mathcal{N}(\mu_{t|\tau}, \Sigma_{t|\tau})$ by \textit{Assumption~\ref{assumption:Gaussian}}, we can derive the conditional mean and covariance of $O_{t|\tau}$ given the Gaussian joint distribution assumption (\textit{Assumption~\ref{assumption:jointGaussian}}) via the well-known projection theorem,
\begin{align}
    &\mathbb{E}[O_{t|\tau}|O_{\tau+a|\tau}] = \notag \\
    &~~~~~\mu_{t|\tau}+\Sigma_{(t|\tau)(\tau+a|\tau)} (\Sigma_{\tau+a|\tau})^{-1}(O_{\tau+a|\tau}  - \mu_{\tau+a|\tau}),\label{eq: mu} \\
    &Cov(O_{t|\tau}|O_{\tau+a|\tau}) =\notag\\
    &~~~~~
    \Sigma_{t|\tau}-\Sigma_{(t|\tau)(\tau+a|\tau)} (\Sigma_{\tau+a|\tau})^{-1}\Sigma_{(\tau+a|\tau)(t|\tau)},
    \label{eq: cov}
\end{align}
where $\Sigma_{(t|\tau)(\tau+a|\tau)}$ is the cross covariance between $O_{t|\tau}$ and $O_{\tau+a|\tau}$.
Note that \eqref{eq: mu} is a random variable because it contains a random variable $O_{\tau+a|\tau}$ and \eqref{eq: cov} is a deterministic value.
Finally, the random variable $\mathbb{E}[O_{t|\tau+a}|O_{\tau+a|\tau}]$ follows
\begin{align}
\begin{split}\label{eq: mu_dist}
&\mathbb{E}[O_{t|\tau+a}|O_{\tau+a|\tau}] \\
&\sim\mathcal{N}\!\left(
\mu_{t|\tau},\,
\Sigma_{(t|\tau)(\tau+a|\tau)}(\Sigma_{\tau+a|\tau})^{-1}
\Sigma_{(\tau+a|\tau)(t|\tau)}\right).
\end{split}
\end{align}

In summary, we have derived the mean and covariance of prediction distributions at step $\tau+a$ based on the information available only at the current step $\tau$. Through this future information, we can construct the estimated random sets $\widehat{\mathcal{X}}^\text{safe}_{t|\tau+1}$.

\section{Recursively Feasible Planning}

\begin{figure}[t]
    \centerline{\includegraphics[width=6.0cm]{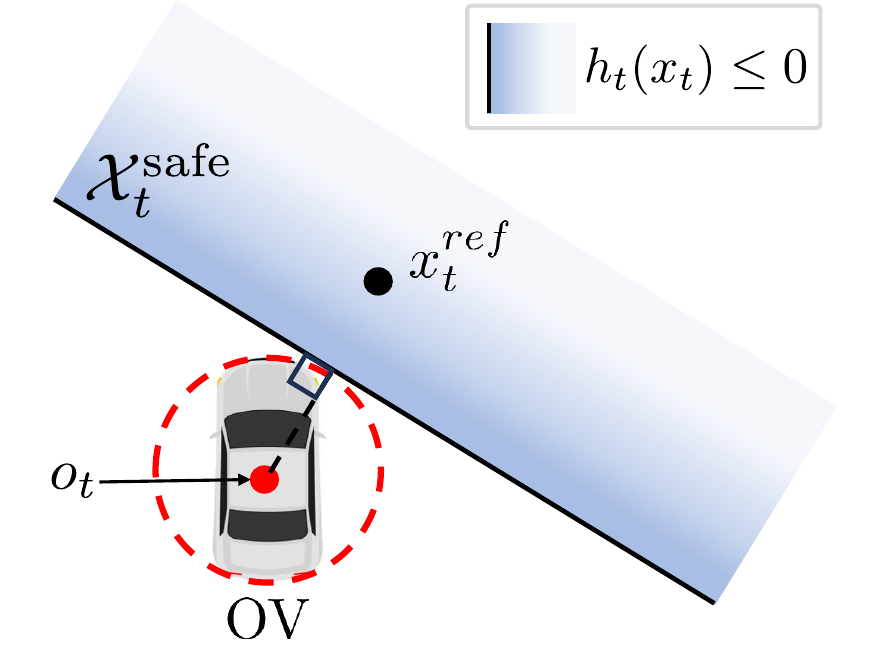}}
    \caption{Circular OV modeling and affine safety constraint.}
    \label{fig:affine constraint}
\end{figure}
In this section, we introduce a nominal MPC, and based on this, propose a PRF-MPC.

\subsection{Nominal MPC}
We model the OV as a circle in a 2-dimensional plane, allowing us to neglect heading uncertainty. 
The safe set of the EV with respect to the OV at time $t$ is defined as
\begin{equation}\label{eq:original_safety}
\{x_t \in \mathbb{R}^{n_x} \mid \|x_t - O_t\|_2^2 \ge r^2 \},
\end{equation}
where $r>0$ is the minimum safety distance.  

To obtain a computationally tractable approximation of \eqref{eq:original_safety}, we make use of a known reference trajectory $x_t^{\text{ref}}$ which remains independent of the actions of the MPC.
Let $h_t$ be the tangent line of the circle 
as shown in Fig.~\ref{fig:affine constraint}. 
The tangent line can be written as
\begin{align}
    h_t &:= -(x^{\text{ref}}_t-\mathbb{E}[O_t])^\intercal\cdot
    (x_t-O_t)
    + r \|(x^{\text{ref}}_t-\mathbb{E}[O_t])\|_2 \nonumber\\
    & = {\textbf{m}_t}^\intercal\big( x_t - O_t \big)+ r \|\textbf{m}_t\|_2=0, \nonumber
\end{align} 
where $\textbf{m}_t := -( x^{\text{ref}}_t-\mathbb{E}[O_t])$ denotes the vector from $x^{\text{ref}}_t$ to $\mathbb{E}[O_t]$, which determines the slope of the tangent line. 

We choose $\textbf{m}_t = -(x^{\text{ref}}_t-\mathbb{E}[O_{t|0}])$ from initial planning step $\tau=0$, and we omit the time index $t$ from $\textbf{m}_t$ in the remainder of the derivation for brevity. The affine safe set can be written as
\[
\mathcal{X}^{\text{safe}}_{t}
= \{x_t \in \mathbb{R}^{n_x} \mid
h_t(x_t) \le 0\},
\]
which conservatively approximates the original circular constraint \eqref{eq:original_safety}. Under this affine formulation, the joint chance-constrained safety constraint becomes
\begin{equation}    \mathbb{P}\Bigg(\bigwedge_{t=1}^T x_t\in\mathcal{X}^{\text{safe}}_{t}\Bigg) \geq 1-\epsilon. \label{eq: chance constraint}
\end{equation}
Each chance constraint can be reformulated in a deterministic form because the random variable $O_t$ follows a Gaussian distribution by \textit{Assumption~\ref{assumption:Gaussian}}. The deterministic form is 
\begin{align}\label{eq: nominal constraint0}
&\bigwedge_{t=1}^T  \textbf{m}^\intercal(x_t - \mu_t)  + r \|\textbf{m}\|_2 + \Gamma_t\sqrt{\textbf{m}^\intercal\Sigma_{t}\textbf{m}}  \leq 0,
\end{align}
where $\epsilon_{t}$ is a uniform allocation of $\epsilon$ over the planning horizon, i.e., $\epsilon_t= \epsilon/T$.
This is a conservative reformulation of the chance constraint, i.e. $ \eqref{eq: nominal constraint0}\Rightarrow \eqref{eq: chance constraint}$ \citep{lefkopoulos2021trajectory}.

With $l_{t|\tau}\big(x_{t|\tau}\big):=\textbf{m}^\intercal(x_{t|\tau} - \mu_{t|\tau}) + r\|\textbf{m}\|_2 + \Gamma_{t}\sqrt{\textbf{m}^\intercal\Sigma_{t|\tau}\textbf{m}},$ we  formulate the nominal MPC problem at time $\tau$ as follows:
\begin{subequations} \label{eq: nominal MPC}
    \begin{alignat} {2}
        & \underset{u_{\tau:T-1}}{\text{min}} &  &\mathcal{J}(x_{\tau+1:T}, u_{\tau:T-1})
        \\
        &\;\;s.t. & \quad &l_{t|\tau}(x_{t|\tau})\leq 0, \;\;
        \forall t \in \mathbb{Z}_{\tau+1:T}, \label{eq: nominal constraint}
        \\
        & & \; & (\ref{eq: dynamics}),(\ref{eq: x bounds}), (\ref{eq: u bounds}). \nonumber
    \end{alignat}
\end{subequations}
Note that \eqref{eq: nominal constraint} can only ensure the collision avoidance chance constraints~\eqref{eq: chance constraint}, but not the probabilistic recursively feasible condition~\eqref{eq: recur chance constraint} because it only considers single planning step constraints. The probabilistically safe sets from \eqref{eq: nominal constraint} are denoted as $\bar{\mathcal{X}}^\text{safe}_{t|\tau}$.


\subsection{PRF-MPC}
Instead of imposing \eqref{eq: nominal constraint}, we propose to enforce an alternative constraint $g_{t|\tau}(l_{t|\tau}(x_{t|\tau}))\leq 0$ where $g_{t|\tau}$ is a surrogate function that we will specify shortly. 
Based on this, the corresponding probabilistic safe sets are
\begin{equation*}
\widetilde{\mathcal{X}}^{\text{safe}}_{t|\tau} =\{x_t\in \mathbb{R}^{n_x} \mid g_{t|\tau}(l_{t|\tau}(x_t)) \leq 0\}.
\end{equation*}

Our goal is to construct $g_{t|\tau}$ that satisfies the following two conditions:
\begin{subequations}\label{eq:conditions}
\begin{enumerate}[label=(\theequation\alph*)]
    \item $g_{t|\tau}\big(l_{t|\tau}(x_{t|\tau})\big) \leq 0 \implies l_{t|\tau}(x_{t|\tau}) \leq 0$ \label{eq:cond_safe}
    \item $\mathbb{P} \Big(\widetilde{\mathcal{X}}^{\text{safe}}_{t|\tau} \subseteq 
    \widehat{\mathcal{X}}^{\text{safe}}_{t|\tau+1}\Big) \geq 1- \bar{\gamma}$ where $\bar{\gamma}=2\gamma/((T-1)T)$ for all $t \in \mathbb{Z}_{2:T}$ and $\tau\in\mathbb{Z}_{0:t-1}$.\label{eq:cond_RF}
\end{enumerate}
\end{subequations}
Condition~\ref{eq:cond_safe} ensures the original safety constraint \eqref{eq: nominal constraint}, and condition \ref{eq:cond_RF} ensures the probabilistic recursive feasibility condition \eqref{eq: recur chance constraint} by the following lemma. 

\begin{lem}\label{lem: Booles}
    Satisfaction of condition~\ref{eq:cond_RF} is sufficient to ensure \eqref{eq: recur chance constraint}.
\end{lem}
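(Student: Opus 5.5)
The plan is a union bound (Boole's inequality) over the pairs $(t,\tau)$ appearing in \eqref{eq: recur chance constraint}, plus a count showing that the per-pair tolerance $\bar{\gamma}$ sums to exactly $\gamma$. Throughout, the safe sets actually imposed by PRF-MPC are $\widetilde{\mathcal{X}}^{\text{safe}}_{t|\tau}$. Their unknown next-step counterparts are modeled at step $\tau$ by the random sets $\widehat{\mathcal{X}}^{\text{safe}}_{t|\tau+1}$. So the event in \eqref{eq: recur chance constraint} is read with $\bar{\mathcal{X}}^{\text{safe}}_{t|\tau}$ replaced by $\widetilde{\mathcal{X}}^{\text{safe}}_{t|\tau}$ and $\bar{\mathcal{X}}^{\text{safe}}_{t|\tau+1}$ replaced by $\widehat{\mathcal{X}}^{\text{safe}}_{t|\tau+1}$.

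First, for each admissible pair define the failure event
\[
E_{t,\tau}:=\big\{\widetilde{\mathcal{X}}^{\text{safe}}_{t|\tau}\not\subseteq\widehat{\mathcal{X}}^{\text{safe}}_{t|\tau+1}\big\}.
\]
The complement of the conjunction in \eqref{eq: recur chance constraint} is the union $\bigcup_{t=2}^{T}\bigcup_{\tau} E_{t,\tau}$. Boole's inequality then gives
\[
\mathbb{P}\Big(\textstyle\bigwedge_{t,\tau}E_{t,\tau}^c\Big)\ \ge\ 1-\sum_{t,\tau}\mathbb{P}(E_{t,\tau}).
\]
Condition~\ref{eq:cond_RF} gives $\mathbb{P}(E_{t,\tau})\le\bar{\gamma}$ for each pair, so the right-hand side is at least $1-N\bar{\gamma}$, where $N$ is the number of pairs.

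Second, count the pairs. For each $t\in\mathbb{Z}_{2:T}$, the inclusions that carry content are the transitions $\tau\to\tau+1$ with $\tau+1<t$, i.e.\ $\tau\in\mathbb{Z}_{0:t-2}$; the terminal index $\tau=t-1$ gives nothing to propagate, since no safe set at $t$ is constructed at step $t$. That is $t-1$ pairs for each $t$, so
\[
N=\sum_{t=2}^{T}(t-1)=\frac{(T-1)T}{2},
\]
and hence $N\bar{\gamma}=\frac{(T-1)T}{2}\cdot\frac{2\gamma}{(T-1)T}=\gamma$, which yields \eqref{eq: recur chance constraint}.

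The main obstacle is this bookkeeping of the index range, since the stated range $\tau\in\mathbb{Z}_{0:t-1}$ would give $T(T+1)/2-1$ pairs and not match $\bar{\gamma}$. I would resolve it by noting that the inclusion for $\tau=t-1$ is vacuous: either set it to hold trivially, or observe that it concerns the already-realized step $t$, so it contributes zero probability mass. The count then becomes exactly $(T-1)T/2$. If the authors intend the full range, the same argument still goes through with a correspondingly smaller $\bar{\gamma}$, because Boole's bound only requires that the per-pair tolerances sum to at most $\gamma$.
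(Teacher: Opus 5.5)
Your proof is correct and follows the same route as the paper. You complement the per-pair inclusion events, apply Boole's inequality with per-pair tolerance $\bar{\gamma}$, and read $\widetilde{\mathcal{X}}^{\text{safe}}_{t|\tau}$ and $\widehat{\mathcal{X}}^{\text{safe}}_{t|\tau+1}$ as the sets in \eqref{eq: recur chance constraint}. Your index bookkeeping is in fact more careful than the paper's: the paper sums over $\tau\in\mathbb{Z}_{0:t-1}$ yet uses the count $(T-1)T/2$, which holds only for $\tau\in\mathbb{Z}_{0:t-2}$, and that is the range actually needed (consistent with the margin sum $\sum_{i=\tau}^{t-2}$ in Lemma~\ref{lem: RecurFeasi Const} and with the proof of Lemma~\ref{lem: RecurFeasi}, which only uses $t\in\mathbb{Z}_{k+2:T}$).
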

The proof is provided in Appendix~\ref{app: pf booles}..

In the following lemma, we provide an affine function $g_{t|\tau}$ that satisfies the two conditions.
\begin{lem}
    Define a positive constant $c^t_{\tau+1|\tau}$ as
    $$\text{max}\{ 
    - \Gamma_t \Big( \sqrt{\textbf{m}^\intercal\Sigma_{t|\tau}\textbf{m}}- \sqrt{ \textbf{m}^\intercal\hat{\Sigma}\textbf{m}} \Big) \notag
     +\Gamma_{\bar{\gamma}} \sqrt{\textbf{m}^\intercal\Sigma_{\hat{\mu}}\textbf{m}}, 0\},$$ where $\hat{\mu}:=\mathbb{E}[O_{t|\tau+1}|O_{\tau+1|\tau}], \Sigma_{\hat{\mu}}:=Cov\big(\hat{\mu}\big),$  $\hat{\Sigma}:=Cov[O_{t|\tau+1}|O_{\tau+1|\tau}]$, and $\Gamma_{\bar{\gamma}}:=\Psi^{-1}(1-\bar\gamma)$. Then, an affine function
    $$g_{t|\tau}\big(l_{t|\tau}(x_{t|\tau})\big)= l_{t|\tau}(x_{t|\tau})+\underset{i=\tau}{\overset{t-2}{\sum}}c^t_{i+1|i}$$ satisfies conditions \ref{eq:cond_safe} and \ref{eq:cond_RF}. \label{lem: RecurFeasi Const}
\end{lem}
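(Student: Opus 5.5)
Condition \ref{eq:cond_safe} is immediate and I will dispose of it first. Each $c^t_{i+1|i}$ is a maximum with $0$, so $\sum_{i=\tau}^{t-2}c^t_{i+1|i}\ge 0$. Hence $g_{t|\tau}(l_{t|\tau}(x))\le 0$ gives $l_{t|\tau}(x)\le -\sum c^t_{i+1|i}\le 0$.

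The real work is condition \ref{eq:cond_RF}. At step $\tau+1$ the tightened set is $g_{t|\tau+1}(l_{t|\tau+1}(x))\le 0$, i.e. $l_{t|\tau+1}(x)+\sum_{i=\tau+1}^{t-2}c^t_{i+1|i}\le 0$. I interpret the random set $\widehat{\mathcal{X}}^{\text{safe}}_{t|\tau+1}$ as this set with $\mu_{t|\tau+1}$ replaced by the random mean $\hat\mu$ from \eqref{eq: mu_dist} and $\Sigma_{t|\tau+1}$ replaced by $\hat\Sigma$ from \eqref{eq: cov}. Both sets are half-spaces with the same normal $\textbf{m}$, since $\textbf{m}$ is frozen at $\tau=0$. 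So inclusion reduces to a scalar comparison of offsets. Given $x$ with $l_{t|\tau}(x)+\sum_{i=\tau}^{t-2}c^t_{i+1|i}\le 0$, I will write
\[
\hat l_{t|\tau+1}(x)=l_{t|\tau}(x)+\textbf{m}^\intercal(\mu_{t|\tau}-\hat\mu)-\Gamma_t\Big(\sqrt{\textbf{m}^\intercal\Sigma_{t|\tau}\textbf{m}}-\sqrt{\textbf{m}^\intercal\hat\Sigma\textbf{m}}\Big).
\]
It then suffices that
\[
\textbf{m}^\intercal(\mu_{t|\tau}-\hat\mu)-\Gamma_t\Big(\sqrt{\textbf{m}^\intercal\Sigma_{t|\tau}\textbf{m}}-\sqrt{\textbf{m}^\intercal\hat\Sigma\textbf{m}}\Big)\le c^t_{\tau+1|\tau}.
\]
When this holds, the leftover sum $\sum_{i=\tau+1}^{t-2}c^t_{i+1|i}$ carries over unchanged, and the inclusion holds deterministically on that event. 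The telescoping structure of $g$ is exactly what makes this bookkeeping close. I treat the constants $c^t_{i+1|i}$ for $i>\tau$ as computed at $\tau$ from \eqref{eq: mu}--\eqref{eq: cov}; I may need to make that explicit.

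For the probability bound, \eqref{eq: mu_dist} gives $\textbf{m}^\intercal(\mu_{t|\tau}-\hat\mu)\sim\mathcal{N}(0,\textbf{m}^\intercal\Sigma_{\hat\mu}\textbf{m})$. The event that it is at most $\Gamma_{\bar\gamma}\sqrt{\textbf{m}^\intercal\Sigma_{\hat\mu}\textbf{m}}$ therefore has probability exactly $1-\bar\gamma$. The covariance term $\hat\Sigma$ is deterministic by \eqref{eq: cov}. By definition $c^t_{\tau+1|\tau}$ dominates $-\Gamma_t(\cdot)+\Gamma_{\bar\gamma}\sqrt{\cdot}$, so on this event the sufficient inequality holds, and the inclusion holds with probability at least $1-\bar\gamma$.

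I expect the main obstacle to be justifying that $\widehat{\mathcal{X}}^{\text{safe}}_{t|\tau+1}$ should be taken as the tightened set at $\tau+1$, with its own $g_{t|\tau+1}$, rather than the nominal $\bar{\mathcal{X}}$. The nominal-set version is easier, because dropping the extra nonnegative sum only enlarges the target set. I would prove the tightened version, since it implies the nominal one and is what the induction across steps needs.
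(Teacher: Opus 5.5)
Your proposal is correct and follows essentially the same route as the paper. Both half-spaces share the frozen normal $\textbf{m}$, so the inclusion reduces to an offset comparison in which the telescoping sums cancel down to $c^t_{\tau+1|\tau}$. The remaining event $\textbf{m}^\intercal\hat\mu \ge \textbf{m}^\intercal\mu_{t|\tau}-\Gamma_{\bar\gamma}\sqrt{\textbf{m}^\intercal\Sigma_{\hat\mu}\textbf{m}}$ is then handled with the Gaussian quantile from \eqref{eq: mu_dist}.

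Your direction $\hat l_{t|\tau+1}(x)-l_{t|\tau}(x)\le c^t_{\tau+1|\tau}$ is the correct one; the paper's intermediate chain writes this comparison reversed, though its final Gaussian step uses the correct event and yields the same constant. The point you flag, that the constants $c^t_{i+1|i}$ for $i>\tau$ must be fixed in advance for the telescoping to close, is also left implicit in the paper.
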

The proof is provided in Appendix~\ref{app: pf RF const}. The proof of \textit{Lemma~\ref{lem: RecurFeasi Const}} is based on the construction of $\widehat{\mathcal{X}}^{\text{safe}}_{t|\tau+1}$ using information available only at the current step $\tau$, following the discussion of future trajectory distributions in Section~\ref{sec:future_distribution}. 

With \textit{Lemma} \textit{\ref{lem: RecurFeasi Const}}, we formulate the PRF-MPC problem as follows.
\begin{subequations} \label{eq: PRF-MPC}
    \begin{alignat} {2}
        & \underset{u_{\tau:T-1}}{\text{min}} &  &\mathcal{J}(x_{(\tau+1):T}, u_{\tau:T-1})
        \\
        &\;\;s.t. && 
        l_{t|\tau}(x_{t|\tau})+\underset{i=\tau}{\overset{t-2}{\sum}}c^t_{i+1|i}
        \leq 0,          \forall t \in \mathbb{Z}_{(\tau+1):T}, \label{eq: recur constraint}
        \\
        & & \quad & (\ref{eq: dynamics}),(\ref{eq: x bounds}),(\ref{eq: u bounds}). \nonumber
    \end{alignat}
\end{subequations}
In the following, we prove that PRF-MPC provides probabilistic safety~\eqref{eq: chance constraint} and probabilistic recursive feasibility guarantee~\eqref{eq: recur chance constraint}.
\setcounter{thm}{0}
\begin{thm} Suppose \textit{Assumptions}\textit{~\ref{assumption:Gaussian}} and \textit{\ref{assumption:jointGaussian}} hold, and our trajectory predictor has \textit{Properties} \textit{1} and \textit{2}. When a feasible solution exists at $\tau=0$, the PRF-MPC remains feasible for all $\tau = 1,\ldots,T-1$, with probability at least $1-\gamma$, and the planned trajectory at each planning step ensures the chance constraint \eqref{eq: chance constraint}.
\end{thm}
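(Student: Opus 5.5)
The plan is to assemble the theorem from Lemmas~\ref{lem: RecurFeasi}, \ref{lem: Booles} and \ref{lem: RecurFeasi Const}. The argument splits into a safety part, which is deterministic given feasibility, and a recursive-feasibility part, which is probabilistic.

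\emph{Safety.} Suppose PRF-MPC \eqref{eq: PRF-MPC} is feasible at some planning step $\tau$. Then each planned state satisfies $g_{t|\tau}(l_{t|\tau}(x_{t|\tau}))\le 0$ for $t\in\mathbb{Z}_{\tau+1:T}$. By condition~\ref{eq:cond_safe}, which Lemma~\ref{lem: RecurFeasi Const} guarantees, this gives $l_{t|\tau}(x_{t|\tau})\le 0$. This is exactly the deterministic constraint \eqref{eq: nominal constraint0}. That constraint implies the individual chance constraints with $\epsilon_t=\epsilon/T$. Boole's inequality then yields the joint chance constraint \eqref{eq: chance constraint}, as in \citet{lefkopoulos2021trajectory}. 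Hence every planned trajectory that exists is probabilistically safe.

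\emph{Recursive feasibility.} Here I would mirror the proof of Lemma~\ref{lem: RecurFeasi}, with $\widetilde{\mathcal{X}}^{\text{safe}}_{t|\tau}$ in place of $\bar{\mathcal{X}}^{\text{safe}}_{t|\tau}$.

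First, by Lemma~\ref{lem: RecurFeasi Const} each inclusion $\widetilde{\mathcal{X}}^{\text{safe}}_{t|\tau}\subseteq\widehat{\mathcal{X}}^{\text{safe}}_{t|\tau+1}$ holds with probability at least $1-\bar\gamma$. There are $(T-1)T/2$ pairs $(t,\tau)$ with $t\in\mathbb{Z}_{2:T}$ and $\tau\in\mathbb{Z}_{0:t-1}$. A union bound therefore shows that all inclusions hold simultaneously with probability at least $1-\gamma$; this is Lemma~\ref{lem: Booles}.

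Second, I need to identify the model set $\widehat{\mathcal{X}}^{\text{safe}}_{t|\tau+1}$ with the set PRF-MPC actually imposes at step $\tau+1$, namely $\widetilde{\mathcal{X}}^{\text{safe}}_{t|\tau+1}$. Property~1 and Property~2, together with \eqref{eq: mu}--\eqref{eq: mu_dist}, say that the distribution of $(\mu_{t|\tau+1},\Sigma_{t|\tau+1})$ realized at step $\tau+1$ coincides with the distribution of $(\hat\mu,\hat\Sigma)$ modeled at step $\tau$. So the event ``inclusion holds for the modeled set'' has the same probability as the event for the realized set.

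It remains to check that the tightening also telescopes. The constraint at $\tau+1$ carries the sum $\sum_{i=\tau+1}^{t-2}c^t_{i+1|i}$, while the constraint at $\tau$ carries the extra term $c^t_{\tau+1|\tau}$. The construction of $c^t_{\tau+1|\tau}$ absorbs, at confidence $\Gamma_{\bar\gamma}$, the shift $\textbf{m}^\intercal(\hat\mu-\mu_{t|\tau})$ in the mean together with the change in the $\Gamma_t\sqrt{\textbf{m}^\intercal\Sigma\textbf{m}}$ term. Consequently $l_{t|\tau}+\sum_{i=\tau}^{t-2}c\le 0$ implies $l_{t|\tau+1}+\sum_{i=\tau+1}^{t-2}c\le 0$ on that event.

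Finally, on the event where all inclusions hold, I apply the induction argument of Lemma~\ref{lem: RecurFeasi}. Start from the feasible solution at $\tau=0$, apply the first control input, and take the tail $(u_{\tau+1:T-1})$ of the previous solution as a candidate at step $\tau+1$. The dynamics \eqref{eq: dynamics} and the bounds \eqref{eq: x bounds} and \eqref{eq: u bounds} are preserved because the sets are time-invariant and the system is deterministic. The safety constraints are preserved by the inclusions. Induction over $\tau=1,\dots,T-1$ gives feasibility at every step with probability at least $1-\gamma$.

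The main obstacle is the identification step between $\widehat{\mathcal{X}}^{\text{safe}}_{t|\tau+1}$ and the realized $\widetilde{\mathcal{X}}^{\text{safe}}_{t|\tau+1}$. It requires that the matching of distributions supplied by Property~2, which is stated only conditionally, carries over to the joint event over all pairs $(t,\tau)$. The union bound must be applied to realized events, not modeled ones. I would first try to argue this pairwise via the tower property and then union-bound.
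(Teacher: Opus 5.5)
Your proposal is correct and follows the paper's route. Safety comes from condition~\ref{eq:cond_safe} of Lemma~\ref{lem: RecurFeasi Const} plus the Gaussian reformulation of \citet{lefkopoulos2021trajectory}. Recursive feasibility comes from condition~\ref{eq:cond_RF}, the union bound of Lemma~\ref{lem: Booles}, and the shift argument of Lemma~\ref{lem: RecurFeasi}. Your treatment of the identification of $\widehat{\mathcal{X}}^{\text{safe}}_{t|\tau+1}$ with the realized set is more careful than the paper's one-line ``treat as a realization'' in the proof of Lemma~\ref{lem: Booles}: you condition on step-$\tau$ information, apply the tower property, and then union-bound the realized events.

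One small slip is inherited from the paper. For $\tau\in\mathbb{Z}_{0:t-1}$ there are $T(T+1)/2-1$ pairs, not $(T-1)T/2$. However, only $\tau\le t-2$ is needed, since step $t$ imposes no constraint on $x_t$, and those pairs number exactly $(T-1)T/2$.
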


\begin{pf}
We show that satisfying constraint \eqref{eq: recur constraint} implies satisfaction of both \eqref{eq: recur chance constraint} and \eqref{eq: chance constraint}. By \textit{Lemma~\ref{lem: RecurFeasi Const}}, the constraint \eqref{eq: recur constraint} ensures condition \ref{eq:cond_RF}, which in turn implies the probabilistic recursively feasible condition \eqref{eq: recur chance constraint} by \textit{Lemma~\ref{lem: Booles}}. Furthermore, \textit{Lemma~\ref{lem: RecurFeasi Const}} also guarantees condition \eqref{eq: nominal constraint}, which is equivalent to \eqref{eq: nominal constraint0}. The constraint \eqref{eq: nominal constraint0} then ensures the chance constraint \eqref{eq: chance constraint}~\citep{lefkopoulos2021trajectory}. ~~~~~~~~~~~~~~~~~~~~~~~~~~~~~~~~~~~~~~~~~~~\qed
\end{pf}

\textit{Remark 1.} The probabilistic recursive feasibility guarantee developed for unimodal Gaussian prediction can be directly extended to the Gaussian Mixture Model (GMM) case. As in \cite{ren2024recursively}, we impose the same constraint \eqref{eq: recur constraint} on each mixture component and uniformly allocate the violation tolerance $\gamma$ across the modes. Under the assumption that the active modes remain unchanged or decrease over the shrinking horizon, the condition \eqref{eq: recur chance constraint} continues to hold. Moreover, this framework naturally generalizes to scenarios with multiple OVs, since the joint safety constraints can be enforced independently for each OV using its own predicted distribution.

\section{Case Study}\label{sec: case study}

In this section, we conduct simulations of a lane-change task and validate the recursive feasibility satisfaction of our proposed method.  
Experiments are conducted with Intel Core i7-14700K (3.40 GHz) to run the simulations, and optimizers using CVXPY and Gurobi 13.0.0.

\subsection{Experiment Settings}

The EV aims to merge into an adjacent lane occupied by an OV. 
The EV dynamics is modeled as a double-integrator
\begin{equation}
 \dot{x} = (\dot{p_1}, \dot{p_2}, \dot{v_1}, \dot{v_2})^\top = (v_1,\, v_2,\, u_1,\, u_2)^\top, \nonumber
\end{equation}
where $(p_1, p_2)$ denote the planar position and $(v_1, v_2) \in [0, 30]\times[-5, 5]\text{m/s}$ denote the linear velocity, and the control inputs $(u_1, u_2)\in [-10, 10]\times[-5, 5]\text{m/s}^2$ represent the linear accelerations. 
For the OV, we use a single-integrator model
\[
\dot{o} = (\dot{p}_1^{\mathrm{OV}},\, \dot{p}_2^{\mathrm{OV}})^\top 
       = (v_1^{\mathrm{OV}},\, v_2^{\mathrm{OV}})^\top,
\]
where \((p_1^{\mathrm{OV}}, p_2^{\mathrm{OV}})\) denote the planar position of OV.  
To generate the OV's trajectory distribution, we sample its velocity at each timestep as
\begin{equation}
(v_1^{\mathrm{OV}},\, v_2^{\mathrm{OV}}) 
\sim \mathcal{N}\!\left(
[15,\,0],\;\nonumber
diag(1.0, 0.25)
\right).
\end{equation}
These continuous models are discretized with a sampling time of 0.5\,s using the forward Euler discretization. The objective function is defined as the 2-norm of the deviation from a reference trajectory, guiding the EV along a prescribed lane-change maneuver. We set the safe distance $r=4$, the planning horizon $T = 9$, a risk tolerance $\epsilon = 0.05$, and a recursive feasibility violation tolerance $\gamma=0.1$. 

We conduct 1,000 trials and report the average results.
As for the metrics, we use the percentage of trials in which the planning remains recursively feasible (``RF Rate''), the 2-norm of the deviation of closed-loop trajectories from the reference trajectory (``Cost''), the minimum distance from the OV (``$\text{d}_{\min}$''), and the average worst-case solve time (``Comp. Time'').

\subsection{Results}
The results of the lane changing experiments are illustrated in Fig.~\ref{fig:traj}, Fig.~\ref{fig: check constraints},  and Table~\ref{tb: results}.
\begin{figure}[h!]
    \centering

    \centerline{\includegraphics[width=9.0cm]{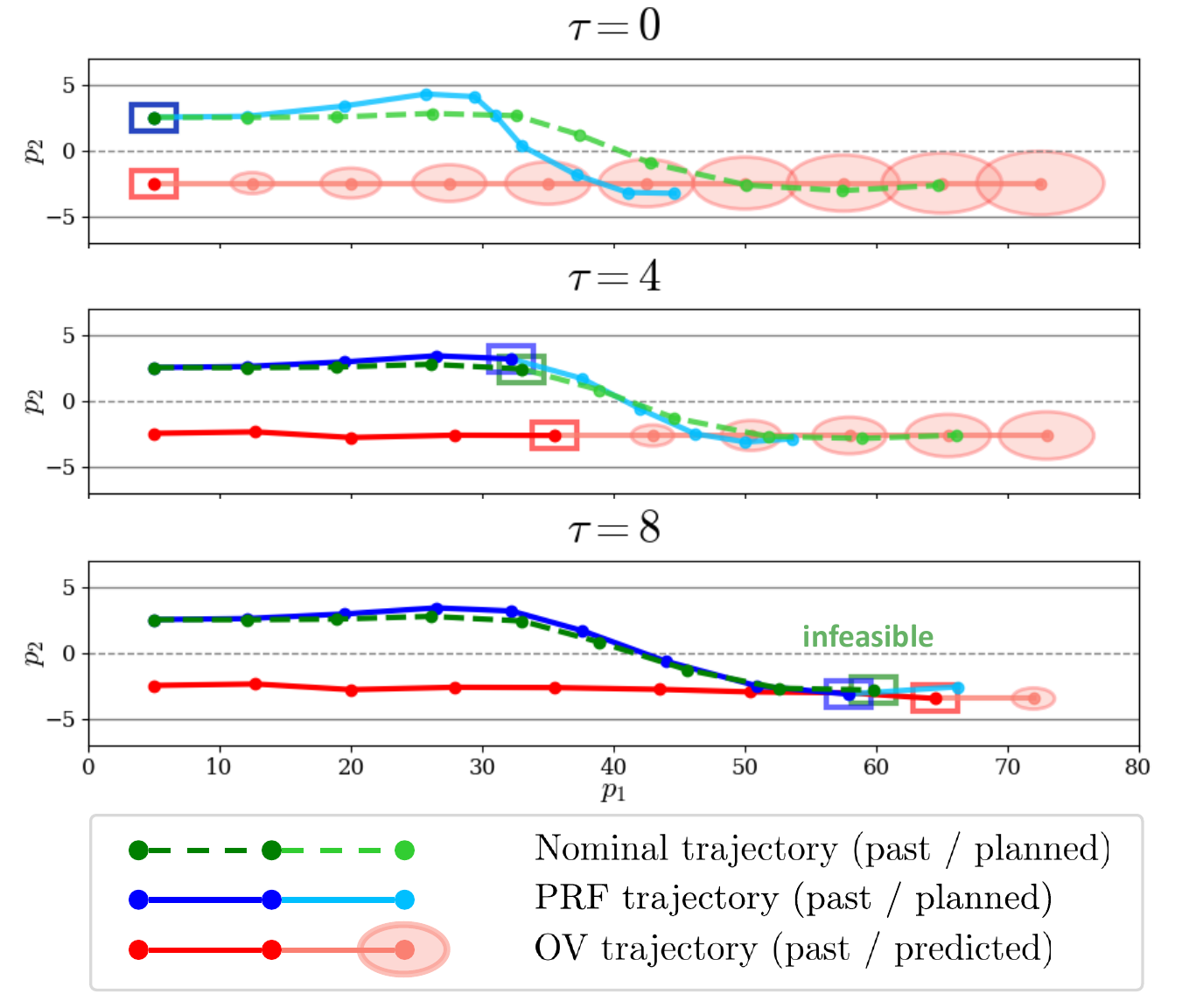}}
    \caption{
    The nominal MPC fails to retain feasibility during closed-loop execution, while PRF-MPC guarantees recursive feasibility until the goal is reached.
    }
    \label{fig:traj}
\end{figure}

\textit{Recursive Feasibility:} As shown in Table~\ref{tb: results}, PRF-MPC achieves a 99.2\% RF rate, whereas nominal MPC results in less than 90\%. This confirms that PRF-MPC ensures the probability of recursive feasibility greater than $1-\gamma$, and suggests that the actual probability of maintaining recursive feasibility is significantly higher than the tolerance $1-\gamma$.

Fig.~\ref{fig:traj} illustrates a scenario in which the nominal MPC becomes infeasible while PRF-MPC remains feasible. The corresponding safe sets at the planning steps are shown in Fig.~\ref{fig: check constraints}. For the nominal MPC, the probabilistically safe set $\bar{\mathcal{X}}_{9|7}^{\text{safe}}$ is not contained within $\bar{\mathcal{X}}_{9|8}^{\text{safe}}$, leading to a violation of recursive feasibility. In contrast, PRF-MPC preserves the set inclusion, i.e., $\widetilde{\mathcal{X}}_{9|7}^{\text{safe}}\subseteq \widetilde{\mathcal{X}}_{9|8}^{\text{safe}}$, thereby maintaining feasibility.

\textit{Conservatism and Safety:} Table~\ref{tb: results} shows that PRF-MPC yields a higher cost and a larger $\text{d}_{\text{min}}$, indicating that it produces more conservative yet safer trajectories. As shown in Fig.~\ref{fig:traj}, although PRF-MPC’s step-wise plans are more conservative, this conservativeness diminishes over time through closed-loop execution.

\textit{Computation Time:} In Table~\ref{tb: results}, the computation time of 0.034\,s demonstrates that the proposed PRF-MPC is suitable for real-time implementation. This efficiency is primarily enabled by the tangent-line approximation of the original nonconvex collision-avoidance constraint in \eqref{eq: chance constraint}. Moreover, PRF-MPC does not incur any additional computational cost relative to the nominal MPC.
 
In summary, our proposed method significantly improves recursive feasibility over nominal MPC approaches with low computational costs. Moreover, the conservativeness of PRF-MPC can be progressively reduced during closed-loop execution. 


\begin{table}[t]
\begin{center}
\caption{Simulation results}\label{tb: results}
\begin{tabular}{|c|c|c|c|c|}
\hline
Planning & RF Rate (\%) & Cost & $\text{d}_{\text{min}}$ & Comp. Time (s) \\\hline
nominal & 88.2 & 25.15 & 4.74 & 0.034\\ 
PRF & 99.2 & 69.38 & 4.95 & 0.034\\ \hline
\end{tabular}
\end{center}
\end{table}

\begin{figure}[t]
    \centerline{\includegraphics[width=9.0cm]{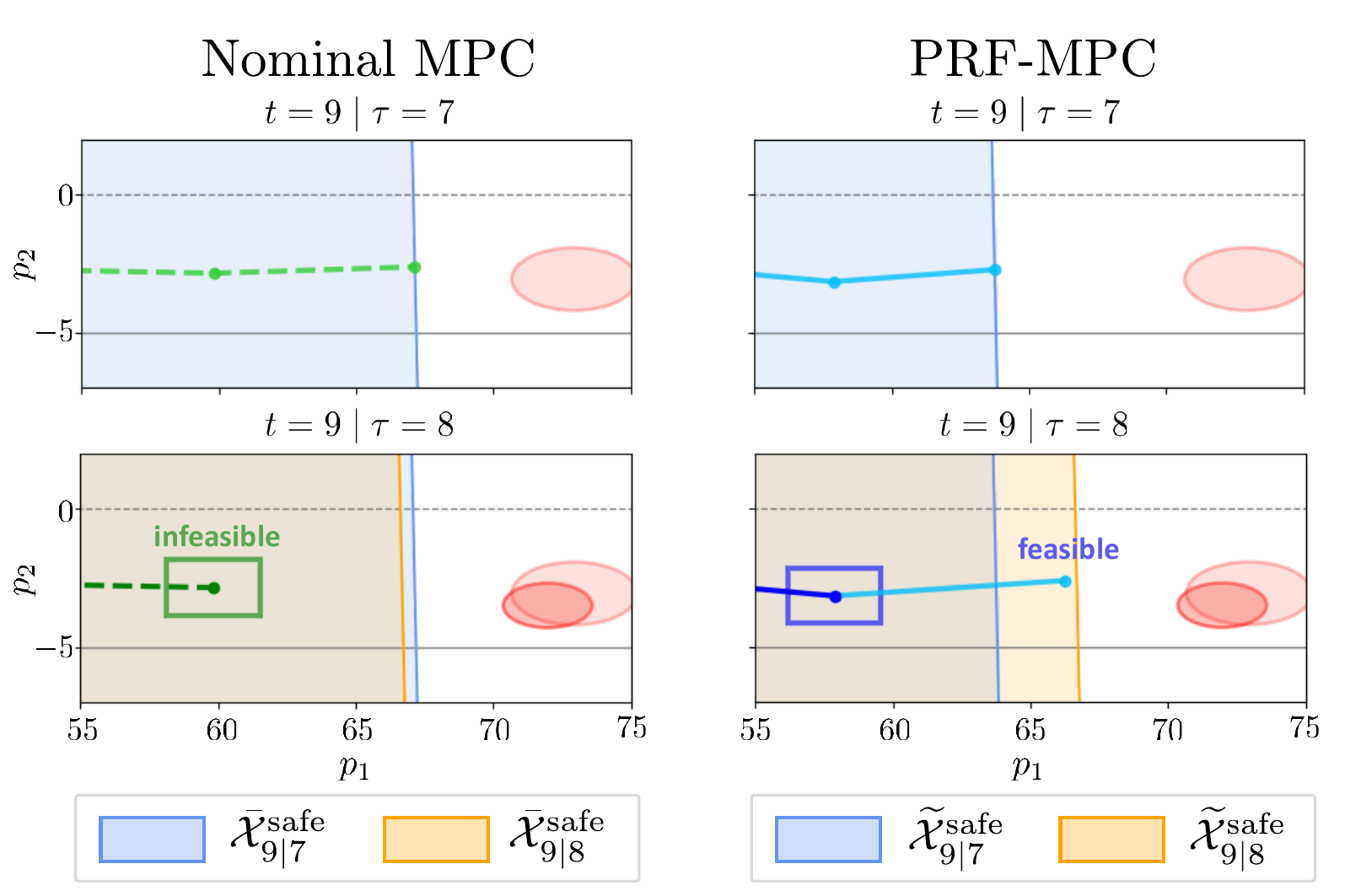}}
    \caption{Safe set inclusion and recursive feasibility.}
    \label{fig: check constraints}
\end{figure}

\section{Conclusion}
We have proposed a probabilistic recursively feasible chance-constrained MPC framework under environmental uncertainty.  Along the way, we have proposed the key properties of an ideal predictor, and then formulated the prediction propagation. This formulation enables constructing the safe set inclusion constraint for the MPC framework, ensuring probabilistic recursive feasibility. We validated our approach in an autonomous-driving lane-change scenario, demonstrating a significant improvement in recursive feasibility.
Our method can be extended to more complex scenarios, such as multiple obstacles and multiple prediction modes.
Nevertheless, current off-the-shelf predictors do not fully possess the properties of the ideal predictor yet, leaving practical approximations and their impact as future work.


\begin{ack}
We sincerely appreciate Professor Maryam Kamgarpour at EPFL for inspiring this research direction and for the valuable discussions during our visit via the EPFL E3 program.
\end{ack}


\bibliography{ifacconf}             
\appendix

\section{Proof}    
\subsection{Moments of $\mathbb{E}[O_{t|\tau+a}|O_{\tau+a|\tau}]$} \label{app: mu_dist}
The derivation of $\eqref{eq: mu_dist}$ follows,
\begin{subequations}
\begin{align}
    &\mathbb{E}\Big[\mathbb{E}[O_{t|\tau+a}|O_{\tau+a|\tau}=o]\Big] \notag \\ 
    &= \mathbb{E}\Big[\mathbb{E}[O_{t|\tau}]+\Sigma_{(t|\tau)(\tau+a|\tau)}(\Sigma_{\tau+a|\tau})^{-1}(o - \mathbb{E}[O_{\tau+a|\tau}])\Big] \notag \\
    &= \mathbb{E}[O_{t|\tau}],\notag \\
    &Cov\Big[\mathbb{E}[O_{t|\tau+a}|O_{\tau+a|\tau}=o]\Big] \notag \\ 
    &= Cov\Big[\mathbb{E}[O_{t|\tau}]+\Sigma_{(t|\tau)(\tau+a|\tau)}(\Sigma_{\tau+a|\tau})^{-1}(o - \mathbb{E}[O_{\tau+a|\tau}])\Big] \notag \\
    &= Cov\Big[\Sigma_{(t|\tau)(\tau+a|\tau)}(\Sigma_{\tau+a|\tau})^{-1}o\Big] \notag \\
    &= \Sigma_{(t|\tau)(\tau+a|\tau)}(\Sigma_{\tau+a|\tau})^{-1}\Sigma_{(t|\tau)(\tau+a|\tau)}. \notag 
\end{align}
\end{subequations}
\subsection{Proof of \textit{Lemma}~\textit{\ref{lem: Booles}}} \label{app: pf booles}
We want to show that \ref{eq:cond_RF} implies \eqref{eq: recur chance constraint}. 
For brevity, let $A_{t|\tau}:= \widetilde{\mathcal{X}}^{\text{safe}}_{t|\tau} \subseteq \widehat{\mathcal{X}}^{\text{safe}}_{t|\tau+1}$. Then condition \ref{eq:cond_RF} can be expressed as $\mathbb{P}\big(A_{t|\tau}\big) \geq 1- \bar{\gamma}$, and this is equivalent to $\mathbb{P}\big(\neg A_{t|\tau}\big) \leq  \bar{\gamma}$. 
Summing over all $t \in \mathbb{Z}_{2:T}$ and $\tau\in\mathbb{Z}_{0:t-1}$, we obtain $\sum_{t=2}^{T} \sum_{\tau=0}^{t-1} \mathbb{P}\big(\neg A_{t|\tau}\big) \leq \gamma$ because $\bar\gamma(T-1)T/2=\gamma$.
By Boole's inequality, $\mathbb{P}\Big(\bigvee_{t=2}^T\bigvee_{\tau=0}^{t-1} \neg A_{t|\tau}\Big) \leq \gamma$.
By De Morgan's laws, this is equivalent to $1-\mathbb{P}\Big(\bigwedge_{t=2}^T\bigwedge_{\tau=0}^{t-1}A_{t|\tau}\Big) \leq \gamma $. 
Finally, this is equivalent to \eqref{eq: recur chance constraint}, because $\widetilde{\mathcal{X}}^{\text{safe}}_{t|\tau}$ is $\bar{\mathcal{X}}^{\text{safe}}_{t|\tau}$ of \eqref{eq: recur chance constraint}, and we treat $\bar{\mathcal{X}}^{\text{safe}}_{t|\tau+1}$ of \eqref{eq: recur chance constraint} as a realization of $\widehat{\mathcal{X}}^{\text{safe}}_{t|\tau+1}$.~~~~~~~~~~~~~~~~~~~~~~~~~~~~~~~~~~~~~~~~~~~~~~~~~~~~~~~~~~~~~~~~~~\qed

\subsection{Proof of \textit{Lemma}~\textit{\ref{lem: RecurFeasi Const}}} \label{app: pf RF const}

    We first want to show that $g_{t|\tau}\big(l_{t|\tau}(x_{t|\tau})\big)$ satisfies \ref{eq:cond_RF}. Note that $ \widehat{\mathcal{X}}^{\text{safe}}_{t|\tau+1}$ in \ref{eq:cond_RF} has a random variable, $\hat{\mu}=\mathbb{E}[O_{t|\tau+1}|O_{\tau+1|\tau}]$, and we define 
\begin{align*}
&\hat{l}_{t|\tau+1}\big(x_{t}\big):=\textbf{m}^\intercal(x_{t} - \hat{\mu}) + r\|\textbf{m}\|_2 + \Gamma_{t}\sqrt{\textbf{m}^\intercal\hat{\Sigma}\textbf{m}},
\\
&\widehat{\mathcal{X}}^{\text{safe}}_{t|\tau+1}:=\{{x_t\in\mathbb{R}^{n_x}\mid g_{t|\tau}\big(\hat{l}_{t|\tau}(x_{t})\big)\leq0}\}
\end{align*}
    where $ \Sigma_{\hat{\mu}}:=Cov\big(\hat{\mu}\big),$ and $\hat{\Sigma}:=Cov[O_{t|\tau+1}|O_{\tau+1|\tau}]$.
    Then, for any $x_t\in\mathbb{R}^{n_x}$, because we use $\textbf{m}_t = -(x^{\text{ref}}_t-\mathbb{E}[O_{t|0}]), \forall t$,
\begin{align*}
    &\widetilde{\mathcal{X}}^{\text{safe}}_{t|\tau} \subseteq 
    \widehat{\mathcal{X}}^{\text{safe}}_{t|\tau+1} \\
    &\iff \{   g_{t|\tau}\big(l_{t|\tau}(x_{t})\big) \leq0 \} \subseteq 
    \{ g_{t|\tau+1}\big(\hat{l}_{t|\tau+1}(x_{t})\big) \leq0 \} \\
    &\iff
    \{  g_{t|\tau}\big(l_{t|\tau}(x_{t})\big) \leq g_{t|\tau+1}\big(\hat{l}_{t|\tau+1}(x_{t})\big)  \leq 0 \}.
\end{align*}
Then, for any $x_t\in\mathbb{R}^{n_x}$, $g_{t|\tau}$ needs to satisfy,
\begin{align}
    \ref{eq:cond_RF}& \iff 
    \mathbb{P}\Big(g_{t|\tau}\big(l_{t|\tau}(x_{t})\big) \leq  g_{t|\tau+1}\big(\hat{l}_{t|\tau+1}(x_{t})\big) \Big) \geq 1-\bar{\gamma}\notag \\
    \iff &\mathbb{P}\Big( l_{t|\tau}(x_{t})+\underset{i=\tau}{\overset{t-2}{\sum}}c^t_{i+1|i} \leq \hat{l}_{t|\tau+1}(x_{t})+\underset{i=\tau+1}{\overset{t-2}{\sum}}c^t_{i+1|i} \Big) \geq 1 - \bar{\gamma}\notag\\
    \iff & \mathbb{P}\Big( l_{t|\tau}(x_{t}) \leq \hat{l}_{t|\tau+1}(x_{t})+c^t_{\tau+1|\tau} \Big)\geq 1 - \bar{\gamma}, \label{eq: recur feasi one step}
\end{align}
    where the constant $c^t_{\tau+1|\tau}$ is an increment margin $l_{t|\tau}(x_{t})$ and $\hat{l}_{t|\tau+1}(x_{t})$.
    
For brevity, we define $A:= \textbf{m}^\intercal\cdot\mu_{t|\tau+1} - \Gamma_t \Big( \sqrt{ \textbf{m}^\intercal\Sigma_{t|\tau}\textbf{m} } - \sqrt{ \textbf{m}^\intercal\hat{\Sigma}\textbf{m} } \Big)$, and to derive $c^t_{\tau+1|\tau}$, we reformulate \eqref{eq: recur feasi one step} as following,
\begin{align} \label{eq: recur feasi const}
    \eqref{eq: recur feasi one step} 
    &\iff  \mathbb{P} \Bigg( \textbf{m}^\intercal \cdot \hat{\mu} \geq  A -c^t_{\tau+1|\tau} \Bigg) \geq 1- \bar{\gamma}\notag\\
    \iff &\mathbb{P} \Bigg(  z\geq \frac{ A -c^t_{\tau+1|\tau}  - \mathbb{E}\big[\textbf{m} \cdot \hat{\mu} \big]}{\sqrt{\textbf{m}^\intercal\Sigma_{\hat{\mu}}\textbf{m}}} \Bigg)  \geq 1- \bar{\gamma} \notag\\
    \iff & A - c^t_{\tau+1|\tau}  - \textbf{m} \cdot \mu_{t|\tau+1} \leq \Phi^{-1}(\bar{\gamma})\sqrt{\textbf{m}^\intercal\Sigma_{\hat{\mu}}\textbf{m}} \notag
    \\
    \iff & c^t_{\tau+1|\tau} \geq  
    - \Gamma_t \Big( \sqrt{\textbf{m}^\intercal\Sigma_{t|\tau}\textbf{m}}- \sqrt{ \textbf{m}^\intercal\hat{\Sigma}\textbf{m}} \Big) \notag\\
    & +\Gamma_{\bar{\gamma}} \sqrt{\textbf{m}^\intercal\Sigma_{\hat{\mu}}\textbf{m}}  \nonumber
\end{align}
where the first equivalence follows from algebraic manipulation of $l_{t|\tau}(x_{t|\tau})$ and $\hat{l}_{t|\tau+1}(x_{t|\tau})$, the second from standardizing $\hat{\mu}$, and the third from reformulating the constraint deterministically using properties of Gaussian random variables. We choose a non-negative value $$c^t_{\tau+1|\tau} := \text{max}\{ 
    - \Gamma_t \Big( \sqrt{\textbf{m}^\intercal\Sigma_{t|\tau}\textbf{m}}- \sqrt{ \textbf{m}^\intercal\hat{\Sigma}\textbf{m}} \Big) \notag
     +\Gamma_{\bar{\gamma}} \sqrt{\textbf{m}^\intercal\Sigma_{\hat{\mu}}\textbf{m}}, 0\}.$$
 By construction of $c^t_{\tau+1|\tau}$, $g_{t|\tau}\big(l_{t|\tau}(x_{t})\big)$ satisfies~\ref{eq:cond_RF}. It also satisfies condition~\ref{eq:cond_safe} because $c^t_{\tau+1|\tau}\geq0$, $g\big(l_{t|\tau}(x_{t})\big) \leq0 \iff l_{t|\tau}(x_{t}) +C_{\tau+1|\tau} \leq0 \Rightarrow l_{t|\tau}(x_{t})\leq0.$~~~~~~~~~~~~~~~~~~~~~~~~~~~~~~~~~~~~~~~~~~~~~~~~~~~~~~~~~~~~~~~~~~~~~~~~~\qed

\end{document}

%% file: notation.tex
\newcommand{\bmat}{\begin{bmatrix}}
\newcommand{\emat}{\end{bmatrix}}

\newcommand{\bsmat}{\begin{bsmallmatrix}}
\newcommand{\esmat}{\end{bsmallmatrix}}

\providecommand{\norm}[1]{\left\lVert#1\right\rVert}

%% file: ifacconf.bib
@article{lefkopoulos2021trajectory,
  title={Trajectory planning under environmental uncertainty with finite-sample safety guarantees},
  author={Lefkopoulos, Vasileios and Kamgarpour, Maryam},
  journal={Automatica},
  volume={131},
  pages={109754},
  year={2021},
  publisher={Elsevier}
}

@article{ahn2021safe,
  title={Safe motion planning against multimodal distributions based on a scenario approach},
  author={Ahn, Heejin and Chen, Colin and Mitchell, Ian M and Kamgarpour, Maryam},
  journal={IEEE Control Systems Letters},
  volume={6},
  pages={1142--1147},
  year={2021},
  publisher={IEEE}
}

@article{ren2022chance,
  title={Chance-constrained trajectory planning with multimodal environmental uncertainty},
  author={Ren, Kai and Ahn, Heejin and Kamgarpour, Maryam},
  journal={IEEE Control Systems Letters},
  volume={7},
  pages={13--18},
  year={2022},
  publisher={IEEE}
}

@article{lindemann2023safe,
  title={Safe planning in dynamic environments using conformal prediction},
  author={Lindemann, Lars and Cleaveland, Matthew and Shim, Gihyun and Pappas, George J},
  journal={IEEE Robotics and Automation Letters},
  volume={8},
  number={8},
  pages={5116--5123},
  year={2023},
  publisher={IEEE}
}

@article{ren2024recursively,
  title={Recursively Feasible Chance-Constrained Model Predictive Control Under Gaussian Mixture Model Uncertainty},
  author={Ren, Kai and Chen, Colin and Sung, Hyeontae and Ahn, Heejin and Mitchell, Ian M and Kamgarpour, Maryam},
  journal={IEEE Transactions on Control Systems Technology},
  volume={33},
  number={4},
  pages={1193--1206},
  year={2024},
  publisher={IEEE}
}

@inproceedings{stamouli2024recursively,
  title={Recursively feasible shrinking-horizon MPC in dynamic environments with conformal prediction guarantees},
  author={Stamouli, Charis and Lindemann, Lars and Pappas, George},
  booktitle={6th Annual Learning for Dynamics \& Control Conference},
  pages={1330--1342},
  year={2024},
}

@inproceedings{ivanovic2019trajectron,
  title={The trajectron: Probabilistic multi-agent trajectory modeling with dynamic spatiotemporal graphs},
  author={Ivanovic, Boris and Pavone, Marco},
  booktitle={Proceedings of the IEEE/CVF international conference on computer vision},
  pages={2375--2384},
  year={2019}
}

@inproceedings{varadarajan2022multipath++,
  title={Multipath++: Efficient information fusion and trajectory aggregation for behavior prediction},
  author={Varadarajan, Balakrishnan and others},
  booktitle={2022 IEEE International Conference on Robotics and Automation (ICRA)},
  pages={7814--7821},
  year={2022},
}

@inproceedings{hewing2020simulation,
  title={On simulation and trajectory prediction with Gaussian process dynamics},
  author={Hewing, Lukas and Arcari, Elena and Fr{\"o}hlich, Lukas P and Zeilinger, Melanie N},
  booktitle={Learning for Dynamics and Control},
  pages={424--434},
  year={2020},
}

@inproceedings{chen2023invariant,
  title={Invariant safe contingency model predictive control for intersection coordination of mixed traffic},
  author={Chen, Xiao and M{\aa}rtensson, Jonas},
  booktitle={2023 IEEE 26th International Conference on Intelligent Transportation Systems (ITSC)},
  pages={3369--3376},
  year={2023},
  organization={}
}

@article{brudigam2021stochastic,
  title={Stochastic model predictive control with a safety guarantee for automated driving},
  author={Br{\"u}digam, Tim and Olbrich, Michael and Wollherr, Dirk and Leibold, Marion},
  journal={IEEE Transactions on Intelligent Vehicles},
  volume={8},
  number={1},
  pages={22--36},
  year={2021},
  publisher={IEEE}
}

@article{yang2026safe,
  title={Safe and Nonconservative Contingency Planning for Autonomous Vehicles via Online Learning-Based Reachable Set Barriers},
  author={Yang, Rui and Zheng, Lei and Ge, Shuzhi Sam and Ma, Jun},
  journal={IEEE Transactions on Control Systems Technology},
  year={2026},
  publisher={IEEE}
}
